\documentclass[letterpaper, 10 pt, conference]{ieeeconf}
\IEEEoverridecommandlockouts
\overrideIEEEmargins
\usepackage[utf8]{inputenc}
\usepackage[english]{babel}
\usepackage{cite}
\usepackage{graphicx}
\usepackage{xcolor}


\usepackage{listings}
\usepackage{graphicx}
\usepackage{amssymb, amsmath, amsfonts}
\usepackage{mathtools}
\usepackage{subcaption}

\usepackage{booktabs}
\usepackage{siunitx}
\usepackage{arydshln, mathtools}
\usepackage[mathscr]{euscript}
\usepackage{cleveref}
\usepackage{bm}         

\newcommand{\ie}{i.e.,~}

\newcommand{\figref}[1]{Fig.~\ref{#1}}
\renewcommand{\eqref}[1]{(\ref{#1})}




\newcommand{\derivative}[2]{\frac{\partial #1}{\partial #2}}


\newcommand{\reals}[1][\empty]{\mathbb{R}^{#1}}

\newcommand{\naturals}{\mathbb{N}}

\newcommand{\nullspace}[1]{{\rm null}\left(#1\right)}

\newcommand{\GG}{\mathcal{G}}
\newcommand{\VV}{\mathcal{V}}
\newcommand{\EE}{\mathcal{E}}
\newcommand{\NN}{\mathcal{N}}
\newcommand{\FF}{\mathcal{F}}

\newcommand{\TT}{\mathcal{T}}

\newcommand{\II}{\mathcal{I}}

\newcommand{\xx}{\arr{x}}

\newcommand{\uu}{\arr{u}}
\newcommand{\hh}{\arr{h}}

\newcommand{\bmat}[1]{\begin{bmatrix} #1 \end{bmatrix}}
\newcommand{\arr}[1]{\bm{#1}}  

\newcommand{\zeroes}{\mathbf{0}}
\newcommand{\ones}{\mathbf{1}}

\newcommand{\norm}[1]{\Vert #1 \Vert}

\newcommand{\rank}[1]{{\rm rank}\left(#1\right)}





\newcommand{\ith}[1][i]{$#1^{\mathrm{th}}$}

\newtheorem{theorem}{Theorem}[section]

\newtheorem{lemma}[theorem]{Lemma}
\newtheorem{definition}{Definition}[section]

\pdfminorversion=4

\begin{document}

\title{\LARGE \bf
Subframework-Based Rigidity Control in Multirobot Networks
\thanks{This work was partially supported by the following projects: UBACyT-2018 20020170100421BA  Universidad de Buenos Aires,  PICT-2019-2371 and PICT-2019-0373  Agencia Nacional de Investigaciones Cient\'ificas y Tecnol\'ogicas, and Universidad de San Andr\'es Research Project, Argentina. J. F, Presenza  thanks Universidad de Buenos Aires for his PhD scolarhsip.}
}

\author{
    Juan F. Presenza$^{1}$,
    J. Ignacio Alvarez-Hamelin$^{2}$,
    Ignacio Mas$^{3}$ and
    Juan I. Giribet$^{4}$
\thanks{$^{1}$J. F. Presenza is with Universidad de Buenos Aires, Facultad de Ingenier\'ia, Paseo Colón 850, C1063ACV, Buenos Aires, Argentina.
        {\tt\small jpresenza@fi.uba.ar}}%
\thanks{$^{2}$J. I. Alvarez-Hamelin is with Universidad de Buenos Aires, Facultad de Ingenier\'ia; and also with CONICET--Universidad de Buenos Aires, INTECIN, Argentina.
        {\tt\small ihameli@fi.uba.ar}}%
\thanks{$^{3}$I. Mas is with Universidad de San Andr\'es and CONICET, Argentina.
        {\tt\small imas@udesa.ar}}%
\thanks{$^{4}$J. I. Giribet is with Universidad de San Andr\'es and CONICET, Argentina.
        {\tt\small jgiribet@conicet.gov.ar}}%

\thanks{
© 20xx IEEE. Personal use of this material is permitted. Permission from IEEE must be
obtained for all other uses, in any current or future media, including
reprinting/republishing this material for advertising or promotional purposes, creating new
collective works, for resale or redistribution to servers or lists, or reuse of any copyrighted
component of this work in other works.}
}
\maketitle

\begin{abstract}
This paper presents an alternative approach to the study of distance rigidity in networks of mobile agents, based on a \textit{subframework} scheme.
The advantage of the proposed strategy lies in expressing {framework} rigidity, which is inherently global, as a set of local properties. Also, we show that a framework's \textit{normalized rigidity eigenvalue} degrades as the graph's diameter increases. Thus, the rigidity eigenvalue associated to each subframework arise naturally as a local rigidity metric.
A decentralized subframework-based controller for maintaining rigidity using only range measurements is developed, which is also aimed to minimize the network's communication load. Finally, we show that the information exchange required by the controller is completed in a finite number of iterations, indicating the convenience of the proposed scheme.
\end{abstract}

\section{Introduction}
\label{sec:intro}
In the last decade, the coordination of networked multirobot systems has become an intensive area of research, motivated by their advantages in adaptability, robustness, and scalability, in a wide variety of applications \cite{Cao2013}.
Several challenges are associated with the decentralized implementation of cooperative missions, specially when dealing with energy, sensing, and communications constraints. Some of the most fundamental topics in multirobot control systems arise when only relative information is available. In such cases, the structure of the underlying network, given by the agents' interactions, has a prominent place. In particular, \textit{distance rigidity} theory plays an outstanding role when only inter-agent distances are known. Firstly, it presents a sufficient condition for formation stabilization when the geometric pattern is specified by a set of inter-agent distances \cite{Krick2009}. Secondly, in network localization, rigidity theory provides with necessary and sufficient conditions for unique localizability of the agents when only relative distance measurements are available \cite{Aspnes2006}. A related emerging subject, not studied here, is the so called \textit{bearing rigidity}, which uses relative angles instead of distances, see \cite{Zhao2019} for a complete review.

Since rigidity is a desired property of networked systems, important efforts have been made towards control strategies for maintaining or recovering it in networks with dynamic topology. The strategies adopted in earlier work can be classified as \textit{continuum} \cite{Zelazo2012, Zelazo2015, Sun2015} or \textit{combinatorial} \cite{Williams2013, Amani2020}.
Nevertheless, to develop fully decentralized algorithms is a common difficulty.
This is due to the fact that rigidity is inherently a global property, \ie it depends on both the topology and the robots' positions of the entire network. 

In \cite{Zelazo2012, Zelazo2015}, the authors proposed a gradient-based controller for rigidity maintenance. The strategy aims to guide the robots' motion to keep the positivity of the \textit{rigidity eigenvalue}, which is a measure of the framework's degree of rigidity. 
A similar scheme was presented in \cite{Sun2015}, although the latter is intended to maximize the rigidity eigenvalue, thus converging to a local optimal formation. 
As a strategy for decentralized implementation, such works developed consensus-based estimators to locally update global information (the rigidity eigenvalue and eigenvector) needed to apply the gradient protocol. 

A combinatorial approach to rigidity maintenance was developed in \cite{Williams2013}. Local rules for link addition and deletion are employed to dictate the robots' movements to preserve rigidity. The combinatorial nature of the scheme makes it efficient, since computations and communications are only required during transitions in the network topology. However, the rigidity maintenance strategy developed is currently only valid in a two dimensional space, limiting its applicability.
A more recent work \cite{Amani2020}, presents a distributed technique to recover from the loss of rigidity caused by a link failure. A lattice of configurations is applied to locally recover the rigidity of faulty subframeworks by adding new links. In such work, a \textit{subframework} is established by a node, its nearest neighbors, and the edges within. The technique adopted is valid since the rigidity of all subframeworks is a sufficient condition for framework rigidity, as the authors prove.

In the existing continuum methods, global information is locally required, thus the control performance greatly depends on the ability of the estimators to track the network's dynamics. These approaches rely on the supposition that the rigidity matrices are static, however they are functions of the robots' positions, limiting their velocities. Also, the convergence properties of these distributed estimators are strongly affected by the diameter \cite{Hendrickx2014}, which increases with the network's size\footnote{The increase of diameter with the network's size is expected because of constraints on energy and computing resources that limit the number of connections.}, thus posing a limitation on the scalability of these schemes.
In contrast, combinatorial approaches have a better scalability since the robots apply local rigidity rules, with no requirement of global information. However, they are conservative in link deletions since redundant edges may not be detected locally.
Also, these schemes do not consider any measure of rigidity related the robots' positions, therefore no optimal criteria concerning the geometrical realization is pursued.

In this work, we aim to overcome these limitations by proposing an extension of the subframework definition, that allows us to state a necessary and sufficient condition for framework rigidity in terms of its subframeworks. 
This enables us to express rigidity as a set of \textit{localized} conditions, without requiring global information nor being conservative. The {rigidity eigenvalues} associated to the subframeworks arise naturally as local rigidity metrics. This allows us to develop a fully scalable gradient-based controller to maintain network's rigidity through robot mobility, by preserving the positivity of every rigidity eigenvalue. 
To reduce the communication load required by this strategy, we include an associated cost functional in the control law.

The remainder of this paper is organized as follows. In Section \ref{sec:preliminaries} we present our notation and an overview of graph theory. Section \ref{sec:rigidity} provides with an introduction to distance-based rigidity theory and our first contribution, an upper bound for the rigidity eigenvalue in terms of the graph diameter. In Section \ref{sec:sub_framework}, we present our subframework definition, and the statement of framework rigidity in terms of its subframeworks, our second contribution. The proposed subframework-based gradient controller is presented in Section \ref{sec:control}, together with its convergence properties, our third contribution. Simulations and results are presented in Section \ref{sec:simulations}, and Section \ref{sec:conclusions} provides with final conclusions and directions for future work.

\section{Preliminaries and Notation}
\label{sec:preliminaries}
In this work, column vectors in $u \in \reals[d]$ are written in lowercase, and $\norm{u}$ denotes the euclidean norm. A column stack of $n$ vectors or matrices of appropriate dimensions is denoted by $\uu = [u_1; \, \ldots; \, u_n]$. The $n \times 1$ vector of all ones is indicated by $\ones_n$, and the all zeroes vector by $\zeroes_n$. Matrices $A \in \reals[m \times n]$ are denoted by letters in uppercase, and the identity matrix of $m \times m$ is $I_m$. The null space and rank of a matrix $A$ are denoted by $\nullspace{A}$ and $\rank{A}$.
The set of ordered eigenvalues of a real symmetric matrix $A\!\in\!\reals[n \times n]$ are indicated by $\lambda_1(A) \leq \lambda_2(A) \leq \ldots \leq \lambda_n(A)$.

Let $\GG = (\VV, \EE)$ be an {undirected} graph with a vertex (or node) set $\VV = \{i\}_{i=1}^n$ that represents $n$ autonomous agents, and $m$ {edges} $\EE = \{e_k\}_{k=1}^{m}$ where $e_k \triangleq \{i, j\}$ indicates an interaction between nodes $i$ and $j$. 
The {neighborhood} of the \ith~node is the set $\NN_i = \{j\!\in\!\VV \, | \, \{i, j\}\!\in\!\EE\}$; and the inclusive neighborhood is defined as $\NN^{\ast}_i = \{i\} \cup \NN_i$. The degree of a vertex $\delta_i \triangleq |\NN_i|$ is the cardinality of its neighborhood.
A {path} in $\GG$ is a sequence of distinct vertices $(i_1, i_2, \ldots, i_{p})$ such that $i_k$ is adjacent to $i_{k+1}$ for $k = 1, \ldots, p-1$. Two nodes are connected if there is a path containing them.
The geodesic distance between two connected nodes $g_{ij}\!\in\!\{0, \ldots, n-1\}$, is the number of edges in a (not necessarily unique) shortest path between them. The eccentricity of a vertex $i$ is the largest geodesic distance with any other node, \ie $\varepsilon_i = \max_j g_{ij}$. And the diameter of a connected graph is equal to the maximum eccentricity $D(\GG) = \max_{i} \varepsilon_{i}$.


\section{Framework Rigidity}
\label{sec:rigidity}
A framework $\FF\!=\!(\GG, \xx)$ is a geometric realization of a graph $\GG$ in a $d$-dimensional space (typically $d\!=\!2$ or $3$), given by the set of positions $\xx = \{x_i\}_{i \in \VV} \subset \reals[d]$. 
Rigidity is a property of frameworks that can be defined in terms of two concepts: equivalence and congruence.

\begin{definition}
    Two frameworks with equal topology $(\GG, \xx)$ and $(\GG, \xx')$ are equivalent if $\norm{x_i - x_j} =  \norm{x'_i - x'_j}$ for all $\{i, j\}\!\in\!\EE$.
    They are congruent if $\norm{x_i - x_j} =  \norm{x'_i - x'_j}$ for all $i, j\!\in\!\VV$.
    \label{def:equiv_cong}
\end{definition}

\begin{definition}[Rigidity]
    A framework $(\GG, \xx)$ is rigid if there exists an $\epsilon\!>\!0$ such that every framework $(\GG, \xx')$ which is equivalent to $(\GG, \xx)$ and satisfies $\norm{x_i - x'_i} < \epsilon$ for all $i\!\in\!\VV$, is also congruent to $(\GG, \xx)$.
    \label{def:rigidity}
\end{definition}
Intuitively, rigidity states that there are not continuous motions of the vertices that preserve distances between each pair of nodes, apart from translations and rotations as a rigid body.

Closely related to Definition \ref{def:rigidity} is the concept of \textit{infinitesimal rigidity}, which is a sufficient condition for rigidity, and is preferred in continuum mobility control applications since it can be tested by an algebraic condition. For a more detailed analysis on the relationship between this two properties, the reader is referred to \cite{Aspnes2006} and references therein.

\subsection{Infinitesimal Rigidity}
Consider a set of velocities (or motions) $u_i\!\in\!\reals[d]$ associated to each agent, and the vector $\uu = [u_1; \, \ldots; \, u_n]$.
If frameworks are considered as bars and joints, then $\uu$ produces in the bar $e_k = \{i, j\}$ a strain equal to
\begin{equation}
   \sigma_k \triangleq r_{ij}^T (u_i - u_j),
   \label{eq:strains}
\end{equation}
where $r_{ij} \triangleq (x_i - x_j) / \norm{x_i - x_j}$.
The vector of all strains $\boldsymbol{\sigma} = [\sigma_1; \ldots; \sigma_m]$ induced by a motion is obtained by $\boldsymbol{\sigma} = R \uu$, where $R = \bmat{R_1;\, \ldots; \, R_m} \in \reals[m \times dn]$ is the \textit{normalized rigidity matrix} \cite{Aryankia2021}. Each row $R_k \in \reals[1 \times dn]$ corresponds to an edge $\{i, j\}$ and has the form
\begin{equation*}
    R_{k} =
    \bmat{
        \zeroes_{d(i-1)}^T & r_{ij}^T & \zeroes_{d(j-i-1)}^T & r_{ji}^T & \zeroes_{d(n-j)}^T
    }.
\end{equation*}
Associated to each velocity, there is an induced energy 
\begin{equation}
    E(\uu) \triangleq \norm{R \uu}^2 = \sum_{k=1}^m \sigma_{k}^2,
    \label{eq:energy}
\end{equation}
that is, the sum of the squared strains over all edges.
An \textit{infinitesimal motion} is any vector $\uu \in \reals[dn]$ that produces zero energy, \ie $\uu \in \nullspace{R}$. 
An important subspace is that of \textit{trivial motions} $\TT_d \subseteq \nullspace{R}$, which correspond to framework displacements as a rigid body (translations and rotations in $\reals[d]$).
From these definitions, emerges the concept of {infinitesimal rigidity}.
\begin{definition}[Infinitesimal Rigidity]
    A framework is infinitesimally rigid if every infinitesimal motion is also a trivial motion, \ie $\TT_d = \nullspace{R}$.
    \label{def:inf_rigidity}
\end{definition}
From this definition, it is clear that a framework is infinitesimally rigid if and only if $\rank{R} = dn - f$, where $f \triangleq \tfrac{d(d+1)}{2}$ is the dimension of $\TT_d$, equal to the number of degrees of freedom of a rigid body in $\reals[d]$.
The spectral characterization of infinitesimal rigidity can be defined in terms of the positive semi-definite \textit{symmetric rigidity matrix}
\begin{equation}
    S \triangleq R^T W R,
    \label{eq:sym_rigidity_matrix}
\end{equation}
where $W$ is an $m \times m $ diagonal positive matrix.
It is clear that the spectrum of $S$ contains the eigenvalues $\lambda_1 = \ldots =  \lambda_{f} = 0$ associated to the subspace $\TT_d$.
Therefore, the following theorem holds.
\begin{theorem}[Rigidity Eigenvalue \cite{Zelazo2012}]
    Let $S$ be the symmetric rigidity matrix of a framework $\FF$, and $\rho \triangleq \lambda_{f+1}(S)$ the associated rigidity eigenvalue, then $\FF$ is infinitesimally rigid if and only if $\rho > 0$.
    \label{th:inf_rigidity}
\end{theorem}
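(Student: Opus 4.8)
The plan is to reduce this spectral statement to the algebraic rank characterization established immediately before the theorem, namely that $\FF$ is infinitesimally rigid if and only if $\rank{R} = dn - f$, equivalently $\dim \nullspace{R} = f$. The bridge between $R$ and its symmetric counterpart $S$ is the observation that both matrices share the same null space.

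First I would prove $\nullspace{S} = \nullspace{R}$. The inclusion $\nullspace{R} \subseteq \nullspace{S}$ is immediate, since $S\uu = R^T W R \uu = R^T W (R\uu)$ vanishes whenever $R\uu = 0$. For the reverse inclusion, suppose $S\uu = 0$; then $\uu^T S \uu = (R\uu)^T W (R\uu) = 0$, and because $W$ is a positive diagonal matrix this quadratic form vanishes only if $R\uu = 0$. Hence the null spaces coincide, and in particular $\dim \nullspace{S} = \dim \nullspace{R}$.

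Next I would use that $S$ is symmetric and positive semi-definite, so it is orthogonally diagonalizable with nonnegative eigenvalues, and the multiplicity of the zero eigenvalue equals $\dim \nullspace{S}$. Since $\TT_d \subseteq \nullspace{R} = \nullspace{S}$ always holds and $\dim \TT_d = f$, at least $f$ eigenvalues vanish, which reconfirms $\lambda_1 = \ldots = \lambda_f = 0$ as noted in the text. Consequently $\rho = \lambda_{f+1}(S) > 0$ is equivalent to the zero eigenvalue having multiplicity exactly $f$, i.e. to $\dim \nullspace{S} = f$.

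Finally I would chain the equivalences: $\rho > 0 \iff \dim \nullspace{S} = f \iff \dim \nullspace{R} = f \iff \rank{R} = dn - f \iff \FF$ is infinitesimally rigid. Given the definitions in hand the argument is essentially immediate; the only step requiring genuine care — and the closest thing to an obstacle — is the identity $\nullspace{S} = \nullspace{R}$, which rests entirely on the positive-definiteness of $W$. Everything else is bookkeeping of eigenvalue multiplicities.
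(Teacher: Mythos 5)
Your proof is correct and is essentially the standard argument: the paper itself imports this theorem from \cite{Zelazo2012} without proof, but the sentence preceding it (that the spectrum of $S$ contains $f$ zero eigenvalues associated to $\TT_d$) gestures at exactly the reasoning you spell out. The one step you rightly flag as needing care, $\nullspace{S} = \nullspace{R}$ via positive-definiteness of $W$, is handled correctly, and the rest is the same eigenvalue-multiplicity bookkeeping combined with the rank characterization $\rank{R} = dn - f$ already stated in the text.
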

The \textit{normalized rigidity eigenvalue}, obtained by setting $W = I_m$, is of great importance since it measures a framework's degree of rigidity in terms of its geometric shape only. Its properties, as well as lower and upper bounds can be found in \cite{Jordan2020, Aryankia2021}.

\subsection{A Rigidity Eigenvalue Upper Bound}
We present our first contribution, an upper bound of the normalized rigidity eigenvalue in terms of the underlying graph's diameter. This is powerful since it uses graph invariants only, regardless of the realization
\begin{theorem}
    Let $\FF = (\GG, \xx)$ be a framework with $n$ nodes, $m$ edges and diameter $D$, then its normalized rigidity eigenvalue $\rho$ is bounded above by $2m/D^2$.
    \label{th:a_bound}
\end{theorem}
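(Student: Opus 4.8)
The plan is to exploit the Courant--Fischer characterization of $\rho = \lambda_{f+1}(S)$ to reduce the claim to exhibiting a single well-chosen test motion. Since $\TT_d \subseteq \nullspace{R} = \nullspace{S}$, for any nonzero $\uu_0 \perp \TT_d$ the subspace $U = \TT_d \oplus \spanned{\uu_0}$ has dimension $f+1$, and a short computation (using $S\uu_T = \zeroes_d$ for $\uu_T \in \TT_d$ together with orthogonality) shows that $\max_{\uu \in U} \uu^T S \uu / \norm{\uu}^2 = E(\uu_0)/\norm{\uu_0}^2$. Hence $\rho \le E(\uu_0)/\norm{\uu_0}^2$, and it suffices to build one admissible $\uu_0$ whose Rayleigh quotient is at most $2m/D^2$.

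I would take the test motion to be a \emph{single-direction} field modulated by a geodesic profile. Let $a, b \in \VV$ be a pair realizing the diameter, $g_{ab} = D$, set $\bar g = \tfrac{1}{n} \sum_{i \in \VV} g_{ai}$, and define the centered scalar profile $\phi_i = g_{ai} - \bar g$. With a unit direction $w \in \reals[d]$ fixed below, the test motion has blocks $u_i = \phi_i w$. The centering gives $\sum_i \phi_i = 0$, which is precisely orthogonality of $\uu_0$ to every translational trivial motion.

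The delicate point, and the step I expect to be the main obstacle, is orthogonality to the \emph{rotational} trivial motions $u_i = A x_i$ with $A^T = -A$. Their pairing with $\uu_0$ equals $w^T A p$ where $p = \sum_i \phi_i x_i$, and this must vanish for every skew-symmetric $A$, which happens iff $w \parallel p$. I resolve this for free by \emph{choosing} $w$ parallel to $p$ (any unit vector if $p = \zeroes_d$); then $w^T A p = \norm{p}\, w^T A w = 0$, since skew forms vanish on the diagonal. This yields $\uu_0 \perp \TT_d$ without disturbing the profile.

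It then remains to estimate the Rayleigh quotient. For the numerator, across any edge $|\sigma_k| = |r_{ij}^T(u_i - u_j)| \le \norm{u_i - u_j} = |\phi_i - \phi_j| = |g_{ai} - g_{aj}| \le 1$, because geodesic distances to $a$ differ by at most one between adjacent nodes; summing over the $m$ edges gives $E(\uu_0) = \sum_k \sigma_k^2 \le m$. For the denominator, $\norm{\uu_0}^2 = \sum_i \phi_i^2 \ge \phi_a^2 + \phi_b^2 = \bar g^2 + (D - \bar g)^2 \ge D^2/2$, where the final step is the elementary inequality $s^2 + (D-s)^2 \ge D^2/2$ and is exactly where the factor $2$ originates. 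Combining, $\rho \le E(\uu_0)/\norm{\uu_0}^2 \le m/(D^2/2) = 2m/D^2$, as claimed. (Note $\uu_0 \neq 0$ because $\phi$ is nonconstant, as $g_{aa}=0 \neq D = g_{ab}$; and if the framework fails to be infinitesimally rigid the bound is trivial, since then $\rho = 0$.)
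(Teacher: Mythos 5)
Your proof is correct, but it takes a genuinely different route from the paper's. The paper does not touch the symmetric rigidity matrix directly: it invokes an external result (\cite[Theorem 4.2]{Jordan2020}, stated for two-dimensional frameworks) giving $\rho \leq \lambda_2(L(\GG))$, and then bounds the algebraic connectivity with the same centered geodesic profile $g_{pi}/D$ and the same two elementary estimates you use (increments of at most one across edges, and $s^2+(D-s)^2\geq D^2/2$). You instead work with $S$ itself via Courant--Fischer on the $(f+1)$-dimensional subspace $\TT_d\oplus\spanned{\uu_0}$, lifting the scalar profile to a vector field $u_i=\phi_i w$ and disposing of the orthogonality to rotations by the neat choice $w\parallel p$ with $p=\sum_i\phi_i x_i$, so that $w^TAp=\norm{p}\,w^TAw=0$ for skew $A$. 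What this buys you is a self-contained argument that is valid in any dimension $d$, whereas the paper's proof as written inherits the $d=2$ restriction of the cited lemma; what the paper's route buys is brevity and an explicit link to the spectral graph-theoretic quantity $\lambda_2(L)$, which motivates the subsequent discussion of diameter effects. Your closing remarks correctly dispatch the edge cases ($\uu_0\neq 0$ since the profile is nonconstant, and the bound is vacuous when $\rho=0$), so no gap remains.
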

\begin{proof}
    We leverage the recent result in \cite[Theorem 4.2]{Jordan2020} which, for two dimensional frameworks, implies $\rho \leq \lambda_2(L(\GG))$, where $L(\GG)$ denotes the graph's laplacian.
    Therefore, it is sufficient to provide with an upper bound for $\lambda_2$.
    Let $p, q$ be two vertices of $\GG$ such that $g_{pq} = D$. Consider the vector $\uu = [u_1; \, \ldots; \, u_n]$ where $u_i = g_{pi}/D$. Now let $\mu = \frac{1}{n}\sum_{i=1}^n u_i$, and define $\tilde{\uu} = \uu - \mu \ones_n$. Observe that, for each edge $e_k = \{i, j\}\!\in\!\EE$, and denoting $\sigma_k \triangleq \tilde{u}_i - \tilde{u}_j$,
    \begin{equation*}
         \sigma_k = \frac{g_{pi} - g_{pj}}{D} \in \{\tfrac{-1}{D}, 0, \tfrac{1}{D}\}.
    \end{equation*}
    Noting that $L(\GG) \ones_n = 0$ and $\ones_n^T \tilde{\uu} = 0$, it follows that
    \begin{equation*}
        \lambda_2(L(\GG)) \leq \frac{{\tilde{\uu}}^T L(\GG) \tilde{\uu}}{{\tilde{\uu}}^T \tilde{\uu}} = \frac{\sum_{k=1}^m \sigma_k^2}{\sum_{i=1}^n (u_i - \mu)^2} \leq \frac{m/D^2}{1 / 2}.
    \end{equation*}
    Which follows since $u_p = 0$ and $u_q = 1$, hence $\sum_{i=1}^n (u_i - \mu)^2 \geq \mu^2 + (1-\mu)^2 \geq \frac{1}{2}$. 
\end{proof}
This bound result suggests that penalizing low values of the normalized rigidity eigenvalue as a control objective is not appropriate for high diameter networks, indicating the need for new measures in such cases.

\section{Subframeworks}
\label{sec:sub_framework}
In this section we present a framework subsetting that yields to $n$ subframeworks $\FF_i = (\GG_i, \xx_i)$, each one associated to a unique vertex $i \in  \VV$, called the \textit{center}. Each subframework is determined by the maximum number of hops (from the center) in which other nodes are included. This number of hops is called the \textit{extent} of $\FF_i$ and is denoted by $h_i\!\in\!\naturals$.
Formally, the \ith~subframework contains the vertex subset $\VV_i = \{j\!\in\!\VV \, | \, g_{ij} \leq h_i\}$, the edge subset $\EE_i = \{\{j, k\} \in \EE \, | \, j, k \in \VV_{i}\}$, and the positions $\xx_i = \{x_j \, | \, j \in \VV_i\}$.
Observe that any node might belong to multiple subframeworks, however it is the center of exactly one of them.

\subsection{Rigidity}
\label{sec:sub_rigidity}
We present the relationship between the rigidity of a framework and the rigidity of its subframeworks.
The work done in \cite{Amani2020} defines subframeworks that are constrained to have only one-hop of extent, which allowed the authors to state only a sufficient condition for rigidity. 
By allowing multi-hops, in Theorem \ref{th:sub_multi_rigidity} we are able to present the necessary and sufficient condition for subframework-based rigidity. Before that, we prove the following useful lemma for $d$-dimensional frameworks ($d=2$ or $3$).
\begin{lemma}
Let $\FF$ be an infinitesimally rigid framework with $n \geq d + 1$ nodes. Then, for each vertex $i$, the set of points $\xx^{\ast}_i = \{x_j \, | \, j \in \NN^{\ast}_i\}$ is in general position, \ie they do not lie in an affine subspace of $\reals[d]$.
\label{lem:general_position}
\end{lemma}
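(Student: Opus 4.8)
The plan is to argue by contradiction, producing a nontrivial infinitesimal motion whenever some $\xx^{\ast}_i$ degenerates. Suppose that for a vertex $i$ the points $\xx^{\ast}_i$ lie in a proper affine subspace of $\reals[d]$. Equivalently, the direction space $U \triangleq \mathrm{span}\{x_j - x_i : j \in \NN_i\}$ has dimension strictly less than $d$, so its orthogonal complement is nontrivial and I may pick a unit vector $v$ with $v^T(x_i - x_j) = 0$, hence $v^T r_{ij} = 0$, for every $j \in \NN_i$. Define the motion $\uu$ by $u_i = v$ and $u_j = \zeroes_d$ for all $j \neq i$. Every edge not incident to $i$ has both endpoints fixed, so its strain vanishes; every edge $\{i,j\}$ incident to $i$ produces strain $r_{ij}^T(u_i - u_j) = v^T r_{ij} = 0$. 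Thus $\uu \in \nullspace{R}$, and it remains to show that $\uu$ is \emph{not} a trivial motion, which by Definition \ref{def:inf_rigidity} contradicts infinitesimal rigidity.

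Before that, I would establish the auxiliary fact that an infinitesimally rigid framework with $n \geq d+1$ nodes must affinely span $\reals[d]$. If instead all positions lay in an affine subspace $B$ of dimension $k < d$, choose a unit vector $e$ orthogonal to the direction space of $B$. Then every motion of the form $u_j = c_j e$, $c_j \in \reals$, produces zero strain (each $r_{ij}$ lies in the direction space of $B$, so $r_{ij}^T e = 0$), giving an $n$-dimensional subspace of $\nullspace{R}$. A motion of this family is trivial only if $c_j = e^T u_j$ is the restriction to the positions of an affine function of $x_j$, and such restrictions span a space of dimension at most $k+1 \leq d < n$. Hence this family contains a nontrivial infinitesimal motion, contradicting rigidity; so the configuration spans $\reals[d]$.

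Returning to the main argument, suppose $\uu$ were trivial, say $u_j = S x_j + t$ with $S$ skew-symmetric and $t \in \reals[d]$. Since $u_j = \zeroes_d$ for $j \neq i$, the affine map $w(x) = Sx + t$ vanishes on all positions except possibly $x_i$, and therefore on their affine hull $A$. Its zero set $Z = \{x : Sx = -t\}$ is an affine subspace of dimension $d - \rank{S}$, and $A \subseteq Z$. The decisive observation is that $S$ is skew-symmetric, so $\rank{S}$ is even. If $x_i \in Z$ then $u_i = w(x_i) = \zeroes_d$, contradicting $v \neq \zeroes_d$. Otherwise $x_i \notin Z$ forces $S \neq 0$, hence $\rank{S} \geq 2$ and $\dim Z \leq d-2$; then the whole configuration lies in $\mathrm{aff}(A \cup \{x_i\})$, an affine subspace of dimension at most $\dim Z + 1 \leq d-1$, contradicting the spanning fact just proved. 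Either way $\uu$ cannot be trivial.

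The only real obstacle is this last nontriviality step: a careless argument would move only vertex $i$ and immediately declare the motion nontrivial, but a rigid-body field can in principle vanish on all but one vertex. Ruling this out hinges on the parity of the rank of skew-symmetric matrices (which forbids a nontrivial infinitesimal rigid motion from vanishing exactly on a hyperplane) together with the spanning lemma; the remaining manipulations, namely verifying zero strain and counting dimensions, are routine.
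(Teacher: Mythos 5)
Your construction is the same as the paper's: assuming $\xx^{\ast}_i$ lies in a proper affine subspace, you move only vertex $i$ along a direction $v$ orthogonal to $\mathrm{span}\{x_j - x_i : j \in \NN_i\}$, check that all strains vanish, and conclude that $\uu$ is a nontrivial infinitesimal motion. The proof is correct.

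Where you add genuine value is the nontriviality step. The paper simply asserts ``clearly $\uu \notin \TT_d$,'' but this is exactly the point that needs care: a rigid-body velocity field \emph{can} vanish on all vertices but one (e.g.\ in $\reals[3]$, an infinitesimal rotation about an axis containing $n-1$ collinear points moves only the remaining off-axis point), so the claim is not automatic. Your argument closes this correctly: a trivial motion $u_j = Sx_j + t$ vanishing on $\{x_j\}_{j\neq i}$ but not at $x_i$ forces $S \neq 0$, the even rank of the skew-symmetric $S$ gives $\dim Z \leq d-2$ for its zero set $Z$, and hence the whole configuration would lie in an affine subspace of dimension $\leq d-1$ --- which your auxiliary spanning lemma (itself correctly proved via the $n$-dimensional family $u_j = c_j e$ versus the at most $(k+1)$-dimensional space of affine restrictions) rules out for an infinitesimally rigid framework with $n \geq d+1$ nodes. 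In short: same route as the paper, but with the one nontrivial step actually justified rather than declared obvious.
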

\begin{proof}
We prove for $d\!=\!3$, and for $d\!=\!2$ follows by analogy. Consider a node $i\!\in\!\VV$ and a labeling of its neighbors $\NN_i\!=\!\{1, 2, \ldots, \delta_i\}$. Suppose that the points $\xx^{\ast}_i = \{x_i, x_1, \ldots, x_{\delta_i}\}$ are contained by a plane, then $\{x_i - x_1, \ldots, x_i - x_{\delta_i}\}$ spans a subspace $\mathcal{S}$ of dimension $\leq 2$. Thus, assigning a motion $\uu = [u_1; \ldots; u_n]$ such that $u_i \neq 0$ and orthogonal to $\mathcal{S}$, and $u_j = 0$ for every $j \neq i$, then clearly $\uu \notin \TT_d$. Therefore, for $k=1, \ldots, m$, the strain $\sigma_k\!=\!0$ if $i$ is not an endpoint of edge $e_k$. But, if $e_k = \{i, j\}$ for some $j \in \NN_i$, then by \eqref{eq:strains}$, \sigma_k = r_{ij}^T u_i = 0$ since $u_i$ is orthogonal to $r_{ij} \in \mathcal{S}$. Then $\uu$ is a non-trivial infinitesimal motion, which means the framework is not infinitesimally rigid.
\end{proof}

\begin{theorem}[Subframework-Based Rigidity]
    Let $\FF$ be a connected $d$-dimensional framework. Then $\FF$ is infinitesimally rigid if and only if there exists a set of extents $\hh = \{h_i\}_{i \in \VV}$ such that every subframework $\FF_i, i \in \VV$ is infinitesimally rigid.
    \label{th:sub_multi_rigidity}
\end{theorem}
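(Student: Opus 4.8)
The plan is to prove the two implications separately; the forward (``only if'') direction is immediate, while the reverse (``if'') direction carries all the weight. For necessity, suppose $\FF$ is infinitesimally rigid and simply take every extent equal to the diameter, $h_i = D(\GG)$ for all $i \in \VV$. Then $\VV_i = \VV$ and $\EE_i = \EE$, so each subframework $\FF_i$ coincides with $\FF$ and is therefore infinitesimally rigid (any choice with $h_i \geq \varepsilon_i$ works equally well). This exhibits the required set of extents.

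For sufficiency, assume extents $\hh$ exist for which every $\FF_i$ is infinitesimally rigid. I want $\nullspace{R} = \TT_d$, and since $\TT_d \subseteq \nullspace{R}$ always holds, it suffices to show $\nullspace{R} \subseteq \TT_d$. Take any $\uu \in \nullspace{R}$. The first step is to localize it: since $\EE_i \subseteq \EE$, the strains \eqref{eq:strains} of the edges of $\FF_i$ all vanish, so the restriction $\uu|_{\VV_i}$ lies in the null space of the rigidity matrix of $\FF_i$; infinitesimal rigidity of $\FF_i$ then forces $\uu|_{\VV_i} \in \TT_d$, i.e.\ there exist a skew-symmetric $\Omega_i$ and a vector $b_i$ with $u_k = \Omega_i x_k + b_i$ for every $k \in \VV_i$. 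Thus $\uu$ acts as a rigid-body motion on each ball.

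The crux is to show that all the local motions $(\Omega_i, b_i)$ coincide, after which $\uu$ is a single global rigid motion and hence lies in $\TT_d$. I would argue edge by edge and then invoke connectivity of $\GG$. Fix an edge $\{i, j\} \in \EE$. Applying Lemma \ref{lem:general_position} to the infinitesimally rigid framework $\FF_i$ at its vertex $j$, the points indexed by the inclusive neighborhood of $j$ within $\FF_i$, namely $\{j\} \cup (\NN_j \cap \VV_i)$, are in general position and therefore affinely span $\reals[d]$. Every such vertex is within one hop of $j$, so it also belongs to $\VV_j$; hence this affinely spanning set lies in $\VV_i \cap \VV_j$. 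On this common set both rigid motions agree with $\uu$, so $\Omega_i x_k + b_i = \Omega_j x_k + b_j$ on a set affinely spanning $\reals[d]$. Writing the difference as a single rigid motion vanishing on an affine basis forces $\Omega_i = \Omega_j$ and $b_i = b_j$. Since this holds across every edge and $\GG$ is connected, all $(\Omega_i, b_i)$ equal a common $(\Omega, b)$; evaluating at each center $k$ of its own subframework $\FF_k$ gives $u_k = \Omega x_k + b$ for all $k \in \VV$, so $\uu \in \TT_d$, completing the inclusion.

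The step I expect to be the main obstacle is precisely this overlap argument: guaranteeing that two subframeworks with adjacent centers share vertices whose positions affinely span the whole space, independently of the dimension and of how small the extents are. The resolution I would rely on is that Lemma \ref{lem:general_position}, applied \emph{inside} the rigid subframework $\FF_i$ to the endpoint $j$, produces such a spanning set automatically and places it inside the one-hop ball of $j$, hence inside $\VV_j$ as well; this is what lets a single shared edge propagate agreement of the local rigid motions, avoiding the stalling at ball boundaries that a naive distance induction would suffer. A minor point to dispatch beforehand is the degenerate regime $n \leq d$ or $h_i = 0$, where ``infinitesimal rigidity'' of a subframework must be read as $\FF_i$ being a simplex, so that the hypotheses of Lemma \ref{lem:general_position} ($n \geq d+1$ and $h_i \geq 1$) are genuinely in force whenever the argument is invoked.
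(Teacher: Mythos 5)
Your proof is correct and takes essentially the same route as the paper: necessity via extents large enough that each subframework is the whole framework, and sufficiency via Lemma~\ref{lem:general_position} applied at an endpoint of an edge inside the adjacent center's rigid subframework to produce $d+1$ shared points in general position, followed by connectivity. The only difference is that you inline the gluing step explicitly (matching the local trivial motions $(\Omega_i, b_i)$ on an affinely spanning overlap), where the paper simply invokes the standard fact that two infinitesimally rigid frameworks sharing $d+1$ points in general position have an infinitesimally rigid union.
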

\begin{proof}
Sufficiency is derived as follows since a framework is equal to the union of its subframeworks. Consider two subframeworks $\FF_i$ and $\FF_j$ with adjacent centers. Since $i \in \FF_j$ and the latter is infinitesimally rigid, then set of inclusive neighbors of $i$ that are also in $\FF_j$, \ie $\NN^{\ast}_i \cap \VV_j$, contains at least $d+1$ nodes in general position (from lemma \ref{lem:general_position}). But $\NN^{\ast}_i \subseteq \VV_i$, thus  $\NN^{\ast}_i \cap \VV_j$ is included in both $\VV_i$ and $\VV_j$. This means that the two subframeworks share at least $d+1$ agents in general position, therefore the union $\FF_i \cup \FF_j = (\GG_i \cup \GG_j, \xx_i \cup \xx_j)$ is infinitesimally rigid. Repeating this for all adjacent subframeworks proves sufficiency. 
To show necessity, set $\hh = \{\varepsilon_i\}_{i \in \VV}$, \ie the eccentricity of each node.
\end{proof}

The contribution of this theorem is that it translates a property such as framework rigidity, which is global in nature, to a collection of conditions that are distributed over the network.
Having subframeworks with small extents is desirable since it is more amenable to the decentralized implementation of control algorithms. If $h_i\!=\!1$ for all $i \in \VV$, agents need only to obtain information from its nearest neighbors.
In contrast, as $h_i$ grows, information from more distant nodes are needed, thus augmenting load and latency of the communication channels. 
Nevertheless, allowing values $h_i \geq 1$ is a necessary condition as stated in Theorem \ref{th:sub_multi_rigidity}. 

A given collection of extents $\hh$ for which all subframeworks are rigid might not be unique. 
Therefore, we define the \textit{rigidity extent} of the \ith~agent as
\begin{equation}
    \eta_i \triangleq \min \{h\!\in\!\naturals \, | \, \FF_i \text{ is infinitesimally rigid}\},
    \label{eq:rigidity_extents}
\end{equation}
a minimum achieved for rigid frameworks. As a measure of the number of hops required for subframework-based rigidity, we define the \textit{worst-case rigidity extent} of a framework as $\eta \triangleq \max_{i \in \VV} \eta_i$.

An example of a rigid framework and its rigidity extents \eqref{eq:rigidity_extents} is shown in \figref{fig:random_framework}. Agents' positions were generated by a uniform distribution, and links using the disk-proximity model with range $\Omega=17.5$.
Using this model, we simulated three groups, each one with $250$ networks, corresponding to different communication ranges $\Omega \in \{25, 20, 17.5\}$.
Frequencies of the worst-case rigidity extent obtained are shown in the left panel of \figref{fig:extents}. The most frequent values for the graph diameter were $7$, $9$ and $10$ for $\Omega= 25$, $20$ and $17.5$, respectively.
The worst-case rigidity extent in relation to the diameter dictates how powerful is Theorem \ref{th:sub_multi_rigidity} in these examples. We will discuss this in more detail in Section \ref{eq:desc_imp}.



\begin{figure}[!tb]
    \centering
    \vspace*{5pt}
    \includegraphics{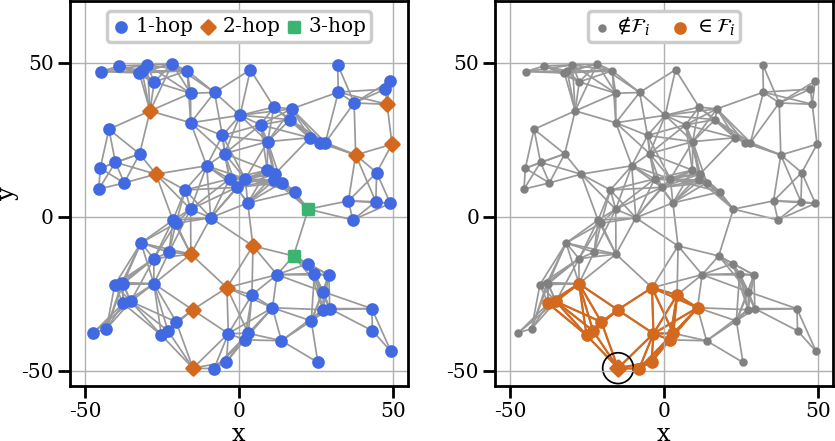}
    \caption{A rigid framework with its $100$ nodes marked according to their rigidity extent (left). A $2$-hop rigid subframework $\FF_i$ is highlighted, with is center surrounded by a circle (right).}
    \label{fig:random_framework}
\end{figure}

\subsection{Communication Load}
The proposed subframework approach implies that each edge in the graph belongs to a number of subframeworks that increases as the extents grow. This means that the links' load augment as more distant nodes use it to broadcast messages within their subframeworks. To quantify the total communication burden, we propose the following metric,
\begin{equation}
    \ell \triangleq \sum_{i \in \VV} \ell_i, \quad \ell_i \triangleq \sum_{j \in \VV_i} c_{ij} \delta_j ,
    \label{eq:load}
\end{equation}
where the coefficients $c_{ij}\! \geq \!0$ account for the more intensive use of edges that are closer to the \ith~center, when the latter spreads information. For this we define $c_{ij}\!=\!\max\{0, h_i - g_{ij}\}$ which, for a certain agent $i$, decreases as $g_{ij}$ grows until $c_{ij}\!=\!0$ when $g_{ij}\!\geq\!h_i$.

If every $h_i\!=\!1$ (as in a nearest neighbors rule) then $\ell\!=\!2m$, which serves as a load's lower bound. Conversely, setting $h_i = \varepsilon_i$ configures the upper-bound of \eqref{eq:load}.
\figref{fig:extents} (right) shows, for the set of simulated networks, the standardized load $\ell / 2m$. It can be seen that, for the three groups, there is an increase of the load with the parameter $\eta$. However, in the given examples, we found that $85\%$ of the simulated networks had $\eta \leq 5$, which corresponded to $1 \leq \ell / 2m \leq 4$ approximately. In Section \ref{sec:control_load} we include the load metric in the control scheme to reduce it through robot mobility.

\begin{figure}[!tb]
    \centering
    \vspace*{5pt}
    \includegraphics{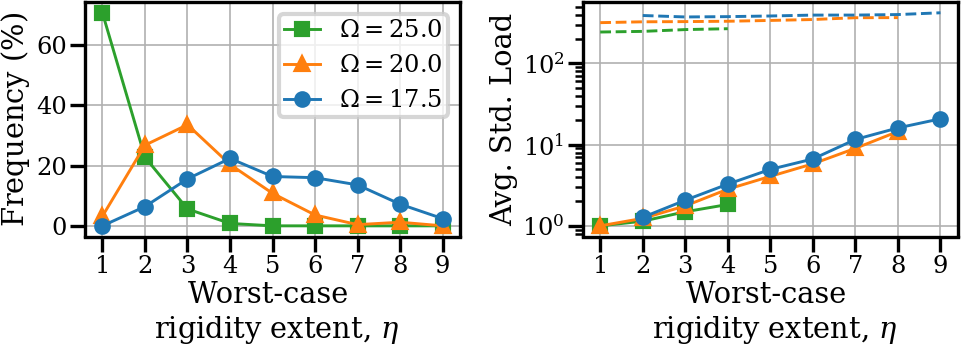}
    \caption{Results for $3$ groups of $250$ rigid frameworks with different communication ranges. In the right panel, the load's upper bound is presented in dashed lines.}
    \label{fig:extents}
\end{figure}



\section{Decentralized Control}
\label{sec:control}
To validate the subframework-based rigidity approach, we propose a decentralized localization and control scheme to command a network of mobile robots, using distance measurements only. Three objectives are pursued: namely, minimization of the total communication load and collision avoidance, which act as repulsive forces; and rigidity maintenance, which is cohesive.

The scheme must be able to adapt to time-varying topologies, allowing changes in the network's structure. To this end, we adopted the classic approach of assigning a weight $0 < w_k \leq 1$ to each edge $e_k = \{i, j\}$ in the graph. 
Weights are defined in such a way that $w_k$ is close to $1$ if $\norm{x_i - x_j} < \Omega$, and close to $0$ otherwise, with a smooth decay near $\norm{x_i - x_j} = \Omega$.
Therefore, we employed the ``s'' shaped logistic curve,
\begin{equation}
    w_k = (1 + e^{-\beta(\Omega - \norm{x_i - x_j})})^{-1},
    \label{eq:weights}
\end{equation}
where $\beta > 0$ is the steepness of the decay. These weights are included in the scheme by setting them as the diagonal entries of the matrix $W$ employed in \eqref{eq:sym_rigidity_matrix}.


\subsection{Rigidity Maintenance}
The strategy adopted is to maintain rigidity of every subframework, hence, of the entire framework.
To do so, we employ the $n$ rigidity eigenvalues $\rho_i \triangleq \lambda_{f+1}(S_i)$, and the corresponding unit eigenvector $\boldsymbol{\nu}_i$, where $S_i$ is the rigidity matrix of the \ith~subframework.
Setting the extents $h_i(t_0) = \eta_i(t_0)$ for all $i$, at initialization time $t_0$, guarantees the positivity of every $\rho_i(t_0)$. In this work, we adopt the strategy of maintaining rigidity through robot mobility while the extents remain invariant for all $t > t_0$. To this end, we define a function that penalizes low values of the rigidity eigenvalues,
\begin{equation}
   \phi \triangleq \sum_{i \in \VV} \phi_i, \quad \phi_i \triangleq \rho_i^{-q}, \quad q > 0.
\end{equation}
Thus, framework rigidity is maintained if $\phi$ remains bounded, since it grows unlimited if any $\rho_i$ vanishes, which happens only if the corresponding subframework loses rigidity. 
Therefore, the \ith~robot follows the negative gradient of $\phi$ with respect to $x_i$. We define the ``inclusion group'' of a node $i$ as the set of nodes that include it in their subframeworks, formally $\II_i = \{j\!\in\!\VV \, | \, g_{ij} \leq h_j\}$, then
\begin{equation}
    \derivative{\phi}{x_i} = \sum_{j \in \VV} \derivative{\phi_j}{x_i} = \sum_{j \in \II_i} \frac{{d}\phi_j}{{d}\rho_j} \boldsymbol{\nu}_j^T \derivative{S_j}{x_i} \boldsymbol{\nu}_j,
    \label{eq:rigid_grad_reduced}
\end{equation}
where we used the fact that $\partial \phi_j / \partial x_i = 0$ if $j \notin \II_i$.
We also used an eigenvalue derivative formula for symmetric matrices as in \cite{Zelazo2012, Sun2015}. Note that \eqref{eq:rigid_grad_reduced} contains only information of a subset of subframeworks $\{\FF_j \, | \, j \in \II_i\}$, and no further knowledge is required.

\subsection{Communication Load Reduction}
\label{sec:control_load}
While maintaining rigidity, our control scheme is capable of reducing as much as possible the network's communication load ($\ell$).
The gradient of the load function is
\begin{equation}
    \derivative{\ell}{x_i} = \sum_{j \in \II_i} \sum_{k \in \VV_j} c_{jk} \derivative{\tilde{\delta}_k}{x_i},
    \label{eq:load_grad}
\end{equation}
where $\tilde{\delta}_k\!\triangleq\!\sum_{l \in \NN_k} w_{kl}$ is the weighted degree of the \ith[k]~node.

\subsection{Collision Avoidance}
In order to avoid collisions it is sufficient to consider only adjacent nodes, due to the disk-proximity model. The collision avoidance function and its gradient are
\begin{align}
\begin{split}
    \psi & \triangleq \sum_{\{i, j\} \in \EE} \norm{x_i - x_j}^{-p}, p > 0, \\
    \derivative{\psi}{x_i} & = - \sum_{j \in \NN_i} p \norm{x_i - x_j}^{-(p+2)} (x_i - x_j).
\end{split}
\end{align}

\subsection{Decentralized implementation}
\label{eq:desc_imp}
Consider the functional $J \triangleq \phi + \ell + \psi$, that measures the performance in rigidity, communications and collisions.
Then, the formation converges to a local minimizer of $J$ if every robot follows the gradient-descent protocol
    \begin{equation}
        \dot{x}_i = - \derivative{J}{x_i} = - \sum_{j \in \II_i} \left( \derivative{\phi_j}{x_i} + \derivative{\ell_j}{x_i} \right) - \sum_{j \in \NN_i} \derivative{\psi}{x_i},
        \label{eq:grad_descent}
    \end{equation}
which is implemented using a time discretization.
As a consequence of the subframework-based scheme, control law \eqref{eq:grad_descent} is inherently decentralized. This is formalized in the following theorem.
\begin{theorem}
    Consider a framework $\FF$ with worst-case rigidity extent $\eta$. Then the information exchange required such that every agent applies \eqref{eq:grad_descent} in a decentralized fashion is completed in a finite number of iterations equal to $2\eta$.
    \label{th:control}
\end{theorem}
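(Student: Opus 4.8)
The plan is to read the statement as a claim about synchronous message-passing rounds, where one ``iteration'' is a single round in which every node exchanges data with its immediate neighbors, and to show that all data needed to evaluate the right-hand side of \eqref{eq:grad_descent} at every node is in place after $2\eta$ such rounds. First I would enumerate, for a fixed robot $i$, exactly which quantities it must hold to compute $\partial J/\partial x_i$. The collision contribution $\partial \psi/\partial x_i$ depends only on $\{x_j : j \in \NN_i\}$, hence is available after one round. For the rigidity contribution I would invoke \eqref{eq:rigid_grad_reduced}: node $i$ needs, for every $j \in \II_i$, the scalar $d\phi_j/d\rho_j$ and the unit eigenvector $\boldsymbol{\nu}_j$, whereas the Jacobian $\partial S_j/\partial x_i$ depends only on $x_i$ and the positions of $i$'s neighbors lying in $\VV_j$, so once $\boldsymbol{\nu}_j$ is delivered that whole term is computable locally. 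The load contribution \eqref{eq:load_grad} similarly requires, for each $j \in \II_i$, the coefficients $c_{jk}$ and weighted degrees determined by the topology of $\FF_j$. Thus the only genuinely non-local ingredients are the per-subframework quantities $(\rho_j, \boldsymbol{\nu}_j)$ and the topology-derived load coefficients of $\FF_j$, each of which is \emph{produced at its own center} $j$.

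Next I would exploit the duality $j \in \II_i \iff g_{ij} \leq h_j \iff i \in \VV_j$, which says that the set of robots requiring the data manufactured by center $j$ is precisely its subframework $\VV_j$. This reduces the question to a per-subframework, two-phase protocol. In the \emph{aggregation} phase each center $j$ floods through $\FF_j$ and collects the positions of all nodes in $\VV_j$, reconstructs $S_j$, and computes $\rho_j$ and $\boldsymbol{\nu}_j$; since every node of $\VV_j$ is within $h_j$ hops of $j$, this completes in at most $h_j$ rounds. In the subsequent \emph{dissemination} phase, $j$ broadcasts $(\rho_j, \boldsymbol{\nu}_j)$ together with the load coefficients back to every node of $\VV_j$, again reaching the most distant member in at most $h_j$ rounds.

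To finish, I would assemble the count. Dissemination cannot begin before aggregation has finished, so subframework $\FF_j$ consumes $2h_j$ rounds. Running all subframeworks concurrently, and recalling that the extents are set to $h_i = \eta_i$ with $\eta = \max_{i\in\VV} \eta_i$, the entire exchange terminates after $\max_{j} 2h_j = 2\eta$ rounds; this also subsumes the single round of collision data because $\eta \geq 1$.

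The step I expect to require the most care is verifying that each phase actually respects the hop budget $h_j$, namely that a flood confined to the induced subgraph $\GG_j$ still reaches every element of $\VV_j$ within $h_j$ rounds rather than taking a detour. The key observation is that subframeworks are geodesically convex about their center: if $g_{jk} \leq h_j$, then every node on a shortest $j$--$k$ path lies within $h_j-1$ hops of $j$ and hence in $\VV_j$, and every edge of that path has both endpoints in $\VV_j$ and therefore belongs to $\EE_j$. Consequently $\GG_j$ preserves the geodesic distance from $j$ to each $k \in \VV_j$, so a breadth-first flood inside $\GG_j$ saturates $\VV_j$ in at most $h_j$ rounds in each direction, closing the count at $2\eta$.
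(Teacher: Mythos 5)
Your proposal is correct and follows essentially the same route as the paper: a two-phase routing protocol in which positions flow from each member of $\VV_j$ to the center $j$ (at most $h_j$ hops) and the resulting per-subframework data flows back (another $h_j$ hops), giving $\max_j 2h_j = 2\eta$ rounds; the only cosmetic difference is that the paper has each center return the already-computed contribution $u_{ji}$ rather than $(\rho_j, \boldsymbol{\nu}_j)$. Your closing check that the induced subgraph $\GG_j$ preserves geodesic distances from $j$ is a worthwhile detail that the paper leaves implicit.
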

This is achieved by implementing a routing protocol that ensures every node $i$ receives the input $u_{ji} \triangleq \partial \phi_j / \partial x_i + \partial \ell_j / \partial x_i$ from each $j\!\in\!\II_i$.
To this end, agent $i$ sends its current position $x_i$ to robots in $\II_i$ through at most $\max \{h_j \, | \, j \!\in\!\II_i\}$ hops; then each \ith[j]~node computes $u_{ji}$ and broadcast it back to $i$. This is depicted in \figref{fig:routing} where a rigid framework is considered.

As proven in \cite{Hendrickx2014}, the minimum number of iterations needed for the convergence of average consensus algorithms is, in the best scenario, equal to the diameter. Therefore, our scheme is preferable whenever $2\eta \leq D$, which gets more likely as the size of the network grows, since $\eta$ is not expected to grow.

\begin{figure}[!tb]
    \centering
    \vspace*{5pt}
    \begin{subfigure}{0.32\columnwidth}
        \includegraphics[scale=0.84]{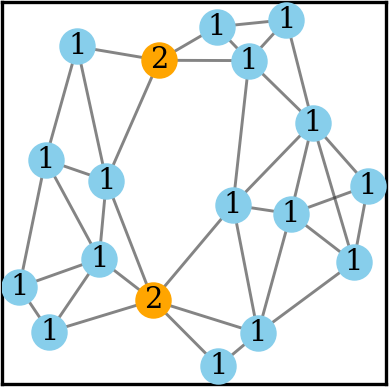}
        \caption{}
        \label{fig:routing_a}
    \end{subfigure}
    \begin{subfigure}{0.32\columnwidth}
        \centering
        \includegraphics[scale=0.84]{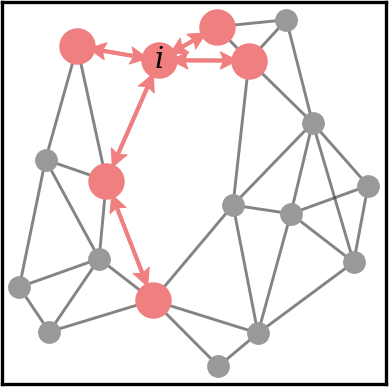}
        \caption{}
        \label{fig:routing_b}
    \end{subfigure}
    \begin{subfigure}{0.32\columnwidth}
        \centering
        \includegraphics[scale=0.84]{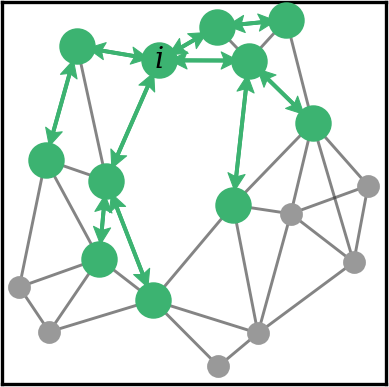}
        \caption{}
        \label{fig:routing_c}
    \end{subfigure}
    \caption{In (a) we labeled the rigidity extent at each node; (b) shows the \ith~inclusion group, arrows indicate the communications to implement \eqref{eq:grad_descent}. In (c) we highlight $\VV_i$, \ie the agents that follow commands from $i$, and a possible information flow.}
    \label{fig:routing}
\end{figure}


\subsection{Distributed Robot Localization}
In order to apply \eqref{eq:grad_descent}, each robot must have an estimate of its position $\hat{x}_i$. 
In this work, we propose an approximate Kalman Filter, applied distributedly by each robot, to update its own position from range measurements with their neighbors.
Let $z_i(t_k)\!\in\!\reals[\delta_i]$ be the vector of noisy distance measurements that the \ith~agent gets by interacting with its neighbors $\NN_i = \{1, 2, \ldots, \delta_i\}$. Each robot models the measurements by the function $f_i: \reals[d (1+\delta_i)] \to \reals[\delta_i]$
\begin{equation*}
    \hat{z}_i \triangleq f_i(\hat{x}_i, \hat{x}_{1}, \ldots, \hat{x}_{{\delta_i}}) = 
    [
    \norm{\hat{x}_i - \hat{x}_{1}}; \,
    \ldots; \,
    \norm{\hat{x}_i - \hat{x}_{{\delta_i}}}
    ],
\end{equation*}
and applies the update
\begin{align}
\begin{split}
    \hat{x}_i(t_k) &= \hat{x}_i(t_{k-1}) + K_i(t_k)(z_i(t_k) - \hat{z}_i(t_k)), \\
    P_i(t_k) &= P_i(t_{k-1}) - K_i(t_k) F_i(t_k) P_i(t_{k-1})
    \label{eq:kalman}
\end{split}
\end{align}
where $K_i = P_i F_i^T (F_i P_i F_i^T + C_i)^{-1}$ is the kalman gain; $P_i$ the covariance of the estimate $\hat{x}_i$; the model jacobian $F_i \triangleq \partial f_i/ \partial x_i$; and $C_i$ the covariance associated to the noisy measurement $z_i$. The communications protocol required for the control scheme ensures that every robot receives the estimated positions $\hat{x}_j$ of its neighbors.

In a noiseless setup, update \eqref{eq:kalman} ensures that the estimates $\hat{x}_1, \ldots, \hat{x}_n$ converge to a formation that is equal to the true positions up to a trivial displacement, provided that the framework is rigid and that the initial guesses $\hat{x}_i(t_0)$ are close enough to $x_i(t_0)$.
Still, a distance-only filter will not correctly estimate the position and the orientation of the framework. 
Therefore, additional information must be given to the robots. To this end, we arbitrarily selecte $d$ robots and provided them with absolute position measurements, which is sufficient in the $d$-dimensional case.

\section{Simulations}
\label{sec:simulations}
To validate our control scheme, we present simulation results for a disk-proximity network of $n\!=\!60$ agents, randomly located within a region of $100 \times 100 \si{\square \meter}$ and range $\Omega\!=\!40 \si{\meter}$, as in \figref{fig:instants} (left). Robots were set to follow the control law \eqref{eq:grad_descent}. \figref{fig:simu_metrics} shows the evolution of the rigidity eigenvalues and the standardized communication load ($\ell / 2m$). It is noted that in the first $25 \si{\second}$, the controllers manages to rapidly augment the rigidity eigenvalues, by both improving robots relative positions and creating new links, with a corresponding increase in the load. Thereafter, the communication load was considerably decreased (about a $40 \%$) without degrading the rigidity eigenvalues. The framework's rigidity eigenvalue, although not directly employed in the control scheme, was also maintained positive as a consequence of Theorem \ref{th:sub_multi_rigidity}. This is shown in \figref{fig:simu_metrics} (left) in a black dashed line. The network's realization after $200 \si{\second}$ is shown in \figref{fig:instants} (right), were a regular pattern emerges, due the controller's effort to maximize the rigidity eigenvalues.
\begin{figure}[!tb]
    \centering
    \vspace*{5pt}
    \includegraphics{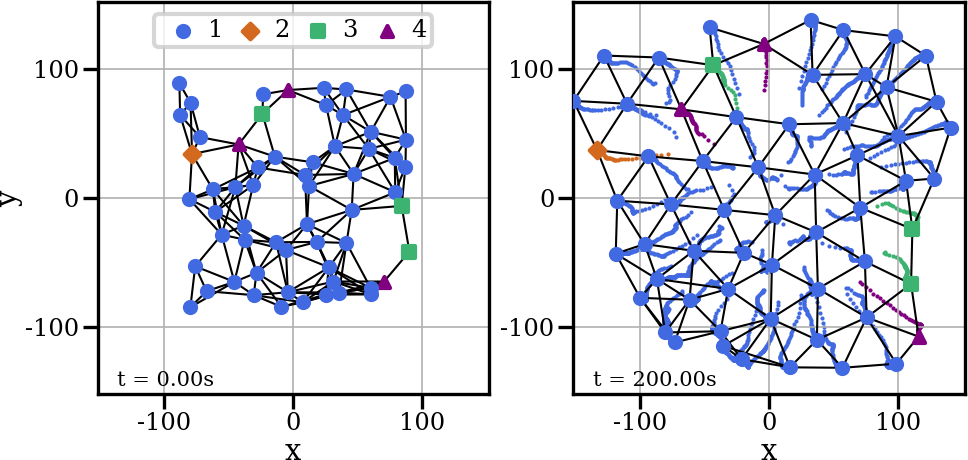}
    \caption{Two realizations at instants $t=0\si{\second}$ and $t=200\si{\second}$. Markers denote the rigidity extent of a node. Robots' trajectories are indicated with small dots.}
    \label{fig:instants}
\end{figure}
\begin{figure}[!tb]
    \centering
    \includegraphics{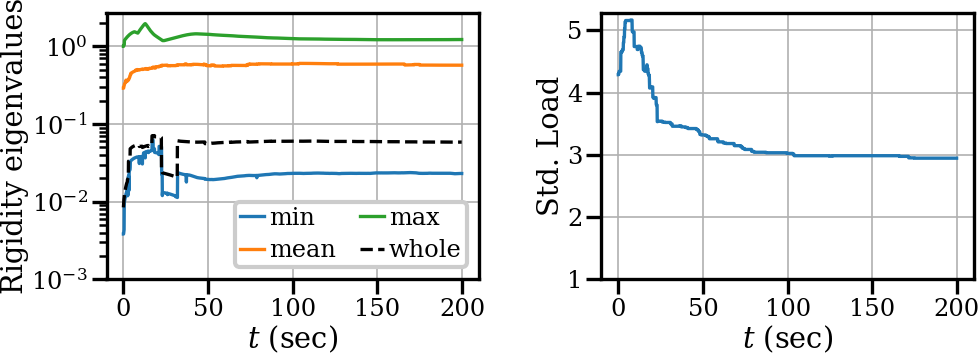}
    \caption{Evolution of the min, mean, max of the $n$ rigidity eigenvalues, the corresponding to the whole framework, and the communication load.}
    \label{fig:simu_metrics}
\end{figure}

\section{Conclusions and Future Work}
\label{sec:conclusions}
We presented a subframework definition that enables to state a novel necessary and sufficient condition for infinitesimal rigidity of a connected framework. This has the power of translating a global property into multiple localized ones, which demonstrated to be useful for decentralized control. However, it tends to augment the network's communication burden unless extents are restricted to one hop. Nevertheless, by including a load function in the control scheme, it was able to reduce the communication load while maintaining rigidity, which is promising.
Finally, we showed that the information exchange is completed in a finite number of iterations, and provided a measure that allows to compare our strategy with those based on consensus protocols.

As future work, we plan to validate the proposed scheme in a more realistic setup, as well as study the impact of the delay associated with the multi-hop communications in the overall performance. Moreover, we intent to enhance the control scheme with the ability to dynamically modify the subframework extents to adapt to changing scenarios and help reduce the communication load and delay.


\bibliographystyle{IEEEtran}
\bibliography{root}

\end{document}